\newcommand{\tr}{\text{ tr}}
\newtheorem{assumption}{Assumption}
\newtheorem{theorem}{Theorem}
\newtheorem{definition}{Definition}
\newtheorem{lemma}{Lemma}
\newtheorem{proposition}{Proposition}
\newtheorem{remark}{Remark}
\title{Optimal Switching Synthesis for Jump Linear Systems with Gaussian initial state uncertainty}
\author{Kooktae Lee\thanks{Address all correspondence to this author.}
    \affiliation{
    Department of Aerospace Engineering\\
    Texas A\&M University\\
    College Station, Texas 77843-3141, USA\\
    Email: animodor@tamu.edu
    }	
}
\author{Raktim Bhattacharya
	\affiliation{
    Department of Aerospace Engineering\\
    Texas A\&M University\\
    College Station, Texas 77843-3141, USA\\
    Email: raktim@tamu.edu
    }
}
\begin{document}

\maketitle    

\begin{abstract}
{\it This paper provides a method to design an optimal switching sequence for jump linear systems with given Gaussian initial state uncertainty. In the practical perspective, the initial state contains some uncertainties that come from measurement errors or sensor inaccuracies and we assume that the type of this uncertainty has the form of Gaussian distribution. In order to cope with Gaussian initial state uncertainty and to measure the system performance, Wasserstein metric that defines the distance between probability density functions is used. Combining with the receding horizon framework, an optimal switching sequence for jump linear systems can be obtained by minimizing the objective function that is expressed in terms of Wasserstein distance. The proposed optimal switching synthesis also guarantees the mean square stability for jump linear systems. The validations of the proposed methods are verified by examples.}
\end{abstract}

\begin{nomenclature}
\entry{$\|\cdot\|$}{Without subscription denotes $\ell_2$-norm}
\entry{$\mathbb{R}^{+}$}{The set of positive real}
\entry{$\mathbb{Z}^+$}{The set of non-negative integer}
\entry{$\mathcal{I}$}{The set of switching modes}
\entry{$\text{tr}\left(\cdot\right)$}{Trace operator for a square matrix}
\entry{m.s.}{Convergence in the mean square sense}
\entry{$X \sim \varsigma\left(x\right)$}{random variable $X$ with probability density function (PDF) $\varsigma\left(x\right)$}
\entry{$\mathcal{N}\left(\mu,\Sigma\right)$}{Gaussian PDF with mean $\mu$ and covariance $\Sigma$}
\end{nomenclature}

\section*{INTRODUCTION}
A jump linear system is defined as a dynamical system consisting of a finite number of subsystems and a switching rule that governs a switching between the family of linear subsystems. Over decades, 
a variety of researches for jump linear systems have been investigated because of its practical implementation. For example, a jump linear system can be used for power systems, manufacturing systems, aerospace systems, networked control systems, etc(\cite{boukas2006stochastic},\cite{cassandras2001optimal},\cite{do2005discrete}). 

In general, problems for jump linear systems branch out into two different fields. The first one is the stability analysis under given switching laws. Since a certain switching law between individually stable subsystem can make the jump linear system unstable\cite{liberzon2003switching}, it is very important to identify conditions under which system can be stable. Interestingly, the jump linear system also can be stable by switching between unstable subsystems. Fang \textit{et al.}\cite{fang2004stabilization} showed sufficient conditions for stability of jump linear systems under arbitrary switching using linear matrix inequalities(LMIs).
Lin \textit{et al.}\cite{lin2009stability} showed necessary and sufficient conditions for asymptotic stability of jump linear systems using finite n-tuple switching sequences, satisfying a certain condition. In addition, broad analysis regarding stability for jump linear systems has been accomplished within few decades(\cite{feng1992stochastic,liberzon1999basic,hespanha1999stability,sun2005analysis,lee2014acc,lee2014robustness}).

On the other hand, switching synthesis problem, which is another branch of jump linear systems, is relatively new and few investigations have been achieved. 
Since the main objective is to design switching sequences that establish both the stability and the performance, this case is much harder than stability analysis problem. For instance, Das and Mukherjee\cite{das2008optimally} solved the problem for an optimal switching of jump linear systems using Pontryagin's minimum principle. In this method, two-point boundary value problem was solved via relaxation method, where ordinary differential equations are approximated by finite difference equations on mesh points. Therefore, the optimality and computational cost depend on mesh size. In addition, the time to find optimal solution varies according to guess solution. 
Egerstedt \textit{et al.}\cite{egerstedt2003optimal} addressed a method to find derivative of the cost function with respect to switching time. However, in this paper, switching sequences are already given and the main focus is to find switching time. Although several other researches regarding optimal control problem together with optimal switching were studied for switched nonlinear systems(\cite{hedlund1999optimal,xu2004optimal,bengea2005optimal}), they may not fit to pure optimal switching problem for jump linear systems.

Here we address optimal switching problem for jump linear systems with given multi-controllers. Multi-controller switching scheme is widely used, such as plant stabilization\cite{minto1991new}, system performance\cite{lin2009stability}, adaptive control\cite{narendra1994improving}, and resource-constrained scheduling\cite{boctor1990some}. Under the assumption that more than two controllers are given to user, our objective is to find the optimal switching sequence which attains the best performance of the system by controller switching. We can also extend our method to multi-model switching problem by generalizing the multi-controller switching problem. Consequently, we aim to synthesize switching protocols that result in the optimality for the performance of jump linear systems. 
Moreover, we address the optimal switching problem with initial state uncertainties. In the practical perspective, initial state may contain uncertainties that  usually come from measurement errors or sensor inaccuracies. Then, the system state is expressed as random variables represented by PDFs. We assume that the initial state PDF has a form of Gaussian distribution that is very common for real implementation. In order to measure the performance of the jump linear system with a given Gaussian PDF, we need to adopt a proper metric. In this paper, Wasserstein metric that assesses the distance between PDFs is used as a tool for both the stability and the performance measure. Hence, we introduce the optimal switching synthesis to achieve the optimality of the system performance with given Gaussian initial PDF by minimizing the objective function that is expressed in terms of Wasserstein distance. We also prove that the convergence of Wasserstein distance implies the mean square stability for the jump linear systems.


Rest of this paper is organized as follows. Section II introduces the problem we want to solve. Brief explanations of Wasserstein distance are described in Section III. Section IV provides a way to solve optimal switching problems using receding horizon framework when Gaussian initial state uncertainty exists. Then, Section V demonstrates the validation of proposed methods by examples and Section VI concludes this paper.


\section*{PROBLEM STATEMENT}
Consider a discrete-time linear system with multi-controller, given by 
\begin{eqnarray}
&x(k+1) = Ax(k)+Bu_{\sigma_k}(x),\quad k\in\mathbb{Z}^{+}, \sigma_k\in\mathcal{I}\label{1}\\
&u_{\sigma_k}(x) = K_{\sigma_k}x\label{2}
\end{eqnarray}
where the state vectors $x\in \mathbb{R}^{n}$, control inputs $u_{\sigma}\in\mathbb{R}^{m}$, the system matrices $A\in\mathbb{R}^{n\times n}$, $B\in\mathbb{R}^{n\times m}$, and the set of modes $\mathcal{I}=\{1,2,\cdots,m\}$. Note that the system matrix $A$ is time-invariant and user can select one controller $K_{\sigma_k}$ out of multiple choices. Without loss of generality, we can convert system \eqref{1}-\eqref{2} to the following jump linear systems by letting $A_{\sigma_k} := A+BK_{\sigma_k}$.
\begin{eqnarray}
x(k+1) = A_{\sigma_k}x(k),\quad k\in\mathbb{Z}^{+}, \sigma_k\in\mathcal{I}\label{3}
\end{eqnarray}
where the system matrices $A_{\sigma_{k}}\in\mathbb{R}^{n\times n}$. 

The system in \eqref{3} represents not only the controller switching as depicted in \eqref{1}-\eqref{2}, but also the system mode switching. Hence, we consider the jump linear system model \eqref{3} and we assume that individual subsystem dynamics $A_{\sigma_k}$ are Schur stable. Our objective is to find the switching sequence, $\sigma = \{\sigma_1, \sigma_2, \cdots \}$, which guarantees the optimal performance of the switched system. For example, with multi-controller, we want to design a switching law which makes the system states reach the origin as fast as possible. Therefore, our aim is not to design controllers, but rather to synthesize the optimal switching sequence.

For simplicity, we assume that there are two different controllers, which are good and poor in terms of system performance. The closed-loop dynamics are given by $A_1$ and $A_2$, respectively. In general, the reason to design multi-controller with respect to single system is to attain not only the system performance but also system stability, robustness, resource-optimal scheduling, etc.

The convergence rate of system state is determined by spectral radius $\rho(A_\sigma) :=\max_{j}|\lambda_{\sigma}^j|$, where $\lambda_{\sigma} = \{\lambda_{\sigma}^1, \lambda_{\sigma}^2, \cdots, \lambda_{\sigma}^n\}$ is the set of eigenvalues for $A_{\sigma}$ mode. According to characteristics of subsystem $A_{\sigma}$, there may exist a surge or an elevation in the state trajectory. In Fig. \ref{fig:schematic}, we show one possibility where the switching is necessary for better performance of the system. Solid line represents the state trajectory of $A_1$ while dashed line shows that of $A_2$. In contrast to $A_2$, which has slow convergence rate with no surge, $A_1$ reaches the origin faster with a surge. Therefore, for better performance, it is clear that $A_2$ mode has to be used from the beginning, and then system has to switch to $A_1$ mode at time $t_k$ as described in arrows in Fig. \ref{fig:schematic}.

\begin{figure}[h!] 
\begin{center}
\includegraphics[width=0.45\textwidth]{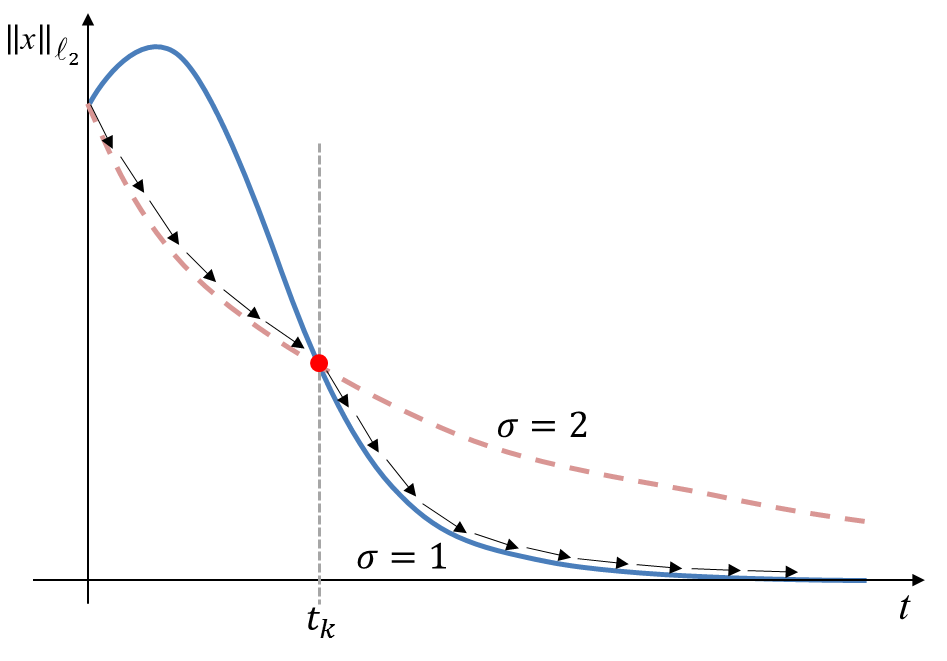}
\vspace{-0.2in}
\caption{Schematic of Optimal Switching for Jump System}
\label{fig:schematic}
\end{center}
\end{figure}

In this paper, motivated by the above example we address the following two questions.
\begin{enumerate}
\item Is there a switching sequence for a jump linear system to get better performance compared to single mode?
\item If the above holds true, can we find the optimal switching sequence?
\end{enumerate}

In general, it is difficult to answer the first question directly. Instead, we want to show the case where the switching synthesis is not required because single mode attains the best performance. 
When $\rho(A_1) < \rho(A_2)$, $A_1$ mode has faster convergence to the origin than $A_2$ mode. In addition, if $\|A_1 x(k)\| < \|A_2 x(k) \|$ for all $k$, then $\|x(k)\|$ using $A_1$ mode is always less than $\|x(k)\|$ using $A_2$ mode. As a result, $A_1$ mode attains the best performance and jump is not necessary. 
%

For the second question, which is the main contribution of this paper, we introduce the optimal switching sequence using receding horizon framework and it is explained in section IV. Since, in most cases, initial condition of system state contains uncertainties, which come from measurement errors or sensor inaccuracies, we will use probability for initial state uncertainty of the system. Moreover, we assume that the type of initial state uncertainties is given by Gaussian distribution. The deterministic single initial state is a special case for Gaussian distribution with zero covariance. Therefore, in this paper we conceptually cover much broader one. Due to this Gaussian PDF, system states become a random number, and hence we cannot use $\ell_2$-norm for the performance measure. As a consequence, we need to adopt a proper metric to quantify the distance between PDFs to measure the performance. For this reason, instead of using $\ell_2$-norm $\|\cdot\|_{\ell_2}$, Wasserstein distance is used as a tool for measuring the performance of jump linear systems. Brief explanations of Wasserstein distance are introduced in the next section.


\section*{WASSERSTEIN DISTANCE}
\begin{definition} ({Wasserstein distance})
Consider the metric space $\ell_{2}\left(\mathbb{R}^{n}\right)$ and let the vectors $x_{1}, x_{2} \in \mathbb{R}^{n}$. Let $\mathcal{P}_{2}(\varsigma_{1},\varsigma_{2})$ denote the collection of all probability measures $\varsigma$ supported on the product space $\mathbb{R}^{2n}$, having finite second moment, with first marginal $\varsigma_{1}$ and second marginal $\varsigma_{2}$. Then the $L_{2}$ Wasserstein distance of order 2, denoted as $_{2}W_{2}$, between two probability measures $\varsigma_{1},\varsigma_{2}$, is defined as
\begin{align}\label{Wassdefn}
&_{2}W_{2}(\varsigma_{1},\varsigma_{2}) \triangleq \\ \nonumber & \left(\displaystyle\inf_{\varsigma\in\mathcal{P}_{2}(\varsigma_{1},\varsigma_{2})}\displaystyle\int_{\mathbb{R}^{2n}} \parallel x_{1}-x_{2}\parallel_{\ell_{2}\left(\mathbb{R}^{n}\right)}^{2} \: d\varsigma(x_{1},x_{2}) \right)
^{\frac{{1}}{2}}.
\label{W-dist}
\end{align}
\end{definition}
\begin{remark}
Intuitively, Wasserstein distance equals the \emph{least amount of work} needed to morph one distributional shape to the other, and can be interpreted as the cost for Monge-Kantorovich optimal transportation plan \cite{villani2003topics}. For notational ease, we henceforth denote $_{2}W_{2}$ as $W$. Further, one can prove (p. 208, \cite{villani2003topics}) that $W$ defines a metric on the manifold of PDFs.
\label{WassRemarkFirst}
\end{remark}
Next, we present new results for stability in terms of $W$.
 
\begin{proposition}\label{m.s.stable}
If we fix Dirac distribution as the reference measure, then distributional convergence in Wasserstein metric is \emph{necessary and sufficient} for convergence in m.s. sense.\label{WConvMeanSqConvProposition}
\end{proposition}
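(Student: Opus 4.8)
The plan is to exploit the degeneracy of the optimal transport problem when one of the two measures is a Dirac mass, which collapses the infimum in \eqref{Wassdefn} to a single, elementary integral. Let $\varsigma_k$ denote the PDF of the state $x(k)$ and fix the reference measure to be $\delta_0$, the Dirac distribution at the origin, so that convergence in m.s. sense means $E\!\left[\|x(k)\|^2\right]\to 0$. The first step is to observe that $\mathcal{P}_2(\varsigma_k,\delta_0)$ is a singleton: any coupling $\varsigma$ on $\mathbb{R}^{2n}$ whose second marginal is $\delta_0$ satisfies $\varsigma(\{x_2\neq 0\}) \le \delta_0(\{x_2\neq 0\}) = 0$, hence is supported on $\mathbb{R}^{n}\times\{0\}$, and the first-marginal constraint then pins it down uniquely as the product measure $\varsigma_k\otimes\delta_0$.

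The second step is the resulting closed-form identity. Integrating $\|x_1-x_2\|^2$ against $\varsigma_k\otimes\delta_0$ gives
\begin{align}\label{WDirac}
W(\varsigma_k,\delta_0)^2 = \int_{\mathbb{R}^{n}} \|x_1\|_{\ell_2(\mathbb{R}^{n})}^2 \, d\varsigma_k(x_1) = E\!\left[\|x(k)\|^2\right].
\end{align}
The second moment on the right is finite for every $k$ because $x(k)$ is a linear image of the Gaussian initial state, so $\varsigma_k$ genuinely has finite second moment and $W(\varsigma_k,\delta_0)$ is well defined.

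With \eqref{WDirac} in hand the proposition is immediate: $W(\varsigma_k,\delta_0)\to 0$ as $k\to\infty$ if and only if $E\!\left[\|x(k)\|^2\right]\to 0$, i.e. distributional convergence in the Wasserstein metric to the Dirac reference is both necessary and sufficient for convergence in the mean square sense. The same computation with an arbitrary fixed target point $a$ yields $W(\varsigma_k,\delta_a)^2 = E\!\left[\|x(k)-a\|^2\right]$, so the equivalence is not tied to the origin and the claim holds for any Dirac reference measure.

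The only point requiring genuine care — and the one I would expect to be the main obstacle in a fully rigorous write-up — is the measure-theoretic claim in the first step that the coupling set reduces to $\{\varsigma_k\otimes\delta_0\}$; once that is established, everything else is bookkeeping with \eqref{WDirac}. A secondary subtlety worth a remark is the finiteness of the second moment of $\varsigma_k$, which is needed for $W$ to be finite and is exactly what fails in the non-stable case, so the equivalence degrades gracefully there (both quantities diverge together).
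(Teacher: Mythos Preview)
Your proposal is correct and follows essentially the same route as the paper: both proofs rest on the observation that, against a Dirac reference, the coupling set in \eqref{Wassdefn} collapses to a singleton, yielding the identity $W^{2}(\varsigma_{k},\delta_{0})=E[\|x(k)\|^{2}]$, from which the equivalence is immediate. The only difference is cosmetic: the paper uses the identity for the direction $W\to 0 \Rightarrow$ m.s.\ convergence and then cites the converse as a well-known general fact, whereas you read both implications directly off the identity --- which is slightly cleaner and more self-contained.
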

\begin{proof}
Consider a sequence of $n$-dimensional joint PDFs $\{\varsigma_{j}\left(x\right)\}_{j=1}^{\infty}$, that converges to $\delta\left(x\right)$ in distribution, i.e., $\displaystyle\lim_{j\rightarrow\infty} W\left(\varsigma_{j}(x), \delta(x)\right) = 0 = \displaystyle\lim_{j\rightarrow\infty} W^{2}\left(\varsigma_{j}(x), \delta(x)\right)$. From (\ref{Wassdefn}), we have
\begin{eqnarray}
&W^{2}\left(\varsigma_{j}(x), \delta(x)\right) = \displaystyle\inf_{\varsigma\in\mathcal{P}_{2}(\varsigma_{j}(x),\delta(x))} \mathbb{E}\left[\parallel X_{j} - 0 \parallel_{\ell_{2}\left(\mathbb{R}^{n}\right)}^{2}\right]\label{ms_stab} \\ \nonumber
& = \mathbb{E}\left[\parallel X_{j} \parallel_{\ell_{2}\left(\mathbb{R}^{n}\right)}^{2}\right]
\end{eqnarray}
where the random variable $X_{j} \sim \varsigma_{j}\left(x\right)$, and the last equality follows from the fact that $\mathcal{P}_{2}(\varsigma_{j}(x),\delta(x)) = \{\varsigma_{j}(x)\}$ $\forall \: j$, thus obviating the infimum. From \eqref{ms_stab}, $\displaystyle\lim_{j\rightarrow\infty} W\left(\varsigma_{j}(x), \delta(x)\right) = 0 \Rightarrow \displaystyle\lim_{j\rightarrow\infty} \mathbb{E}\left[\parallel X_{j} \parallel_{\ell_{2}}^{2}\right] = 0$, establishing distributional convergence to $\delta(x) \Rightarrow$ m.s. convergence. Conversely, m.s. convergence $\Rightarrow$ distributional convergence, is well-known \cite{grimmett2001probability} and unlike the other direction, holds for arbitrary reference measure.
\end{proof}


\begin{proposition}($W^{2}$ between Gaussian and Dirac PDF (see e.g., p. 160-161,\cite{hassani2013mathematical}))
The Wasserstein distance between Gaussian and Dirac PDF supported on $\mathbb{R}^{n}$, with respective joint PDFs $\varsigma = \mathcal{N}\left(\mu,\Sigma\right)$ and $\delta\left(x\right) = \displaystyle\lim_{\mu,\Sigma \rightarrow 0} \mathcal{N}\left(\mu,\Sigma\right)$, is given by,
\begin{eqnarray}\label{staticW}
W^{2}\left(\mathcal{N}\left(\mu,\Sigma\right), \delta\left(x\right)\right) = \parallel \mu \parallel_{\ell_{2}\left(\mathbb{R}^{n}\right)}^{2} + \: \text{tr}\left(\Sigma\right).
\label{GaussianDiracW}
\end{eqnarray}
\label{GaussToDiracCorollary}
\end{proposition}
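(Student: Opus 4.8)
The plan is to collapse the infimum in definition \eqref{Wassdefn} to a trivial one, exactly as was done in the proof of Proposition~\ref{WConvMeanSqConvProposition}, and then to evaluate the resulting second moment of a Gaussian vector. First I would observe that, since the second marginal $\delta(x)$ is a unit point mass at the origin, every coupling $\varsigma \in \mathcal{P}_{2}(\mathcal{N}(\mu,\Sigma),\delta(x))$ on $\mathbb{R}^{2n}$ must place all of its mass on the affine subspace $\{(x_1,x_2) : x_2 = 0\}$; together with the prescribed first marginal this forces $\varsigma = \mathcal{N}(\mu,\Sigma)\otimes\delta$. Hence $\mathcal{P}_{2}(\mathcal{N}(\mu,\Sigma),\delta(x))$ is a singleton, the infimum is vacuous, and $W^{2}\left(\mathcal{N}(\mu,\Sigma),\delta(x)\right) = \mathbb{E}\!\left[\|X - 0\|_{\ell_2(\mathbb{R}^n)}^{2}\right] = \mathbb{E}\!\left[\|X\|^{2}\right]$ with $X \sim \mathcal{N}(\mu,\Sigma)$.

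The second step is the moment computation. Writing $\|X\|^{2} = X^{\top}X = \text{tr}\!\left(X X^{\top}\right)$ and using linearity of the expectation and of the trace, $\mathbb{E}\!\left[\|X\|^{2}\right] = \text{tr}\!\left(\mathbb{E}[X X^{\top}]\right) = \text{tr}\!\left(\Sigma + \mu\mu^{\top}\right) = \text{tr}(\Sigma) + \text{tr}(\mu\mu^{\top}) = \text{tr}(\Sigma) + \|\mu\|^{2}$, which is precisely \eqref{staticW}. Equivalently, one can decompose $X = \mu + Z$ with $Z \sim \mathcal{N}(0,\Sigma)$, expand $\mathbb{E}\!\left[\|\mu + Z\|^{2}\right] = \|\mu\|^{2} + 2\mu^{\top}\mathbb{E}[Z] + \mathbb{E}\!\left[\|Z\|^{2}\right]$, note that the cross term vanishes because $\mathbb{E}[Z] = 0$, and identify $\mathbb{E}\!\left[\|Z\|^{2}\right] = \sum_{i} \mathbb{E}[Z_i^{2}] = \sum_{i}\Sigma_{ii} = \text{tr}(\Sigma)$.

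I do not expect a genuine obstacle here; the only point deserving care — and the one I would flag explicitly — is the limiting reading $\delta(x) = \lim_{\mu,\Sigma\to 0}\mathcal{N}(\mu,\Sigma)$ in the statement. Strictly, the identity should be established for a fixed Dirac mass at the origin, and then one checks that passing the limit through $W^{2}$ is legitimate. This is immediate: $W$ is a bona fide metric on the space of PDFs with finite second moment (Remark~\ref{WassRemarkFirst}), the coupling set remains a singleton for every $(\mu,\Sigma)$, and the right-hand side $\|\mu\|^{2} + \text{tr}(\Sigma)$ is continuous and tends to $0$ as $\mu,\Sigma\to 0$, so the statement is internally consistent. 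I would therefore prove the formula for the Dirac at the origin and close with the remark that a Dirac mass located at an arbitrary point $a$ simply replaces $\|\mu\|^{2}$ by $\|\mu - a\|^{2}$.
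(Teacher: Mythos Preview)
Your argument is correct, but there is nothing to compare it against: the paper does not supply a proof of this proposition at all, it merely states the formula and points to an external reference (p.~160--161 of \cite{hassani2013mathematical}). Your first step is exactly the singleton-coupling observation already used in the proof of Proposition~\ref{WConvMeanSqConvProposition}, and your second step is the standard second-moment identity $\mathbb{E}[\|X\|^{2}] = \text{tr}(\Sigma + \mu\mu^{\top}) = \|\mu\|^{2} + \text{tr}(\Sigma)$; both are sound, and the closing remarks about the limit reading and the shift to a Dirac at $a$ are accurate but optional.
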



\section*{SWITCHING SYNTHESIS USING RECEDING HORIZON FRAMEWORK WITH WASSERSTEIN METRIC}
\subsection*{Optimal Switching Problem}

$W^{2}$ defined in \eqref{staticW} represents square Wasserstein distance at fixed time. However, because the state PDF changes over time along dynamics, $W^2$ also changes as time goes. The following proposition expresses time-varying square $W$ distance between $\mathcal{N}(\mu ,\Sigma)$ and $\delta(x)$ at time k.
\begin{proposition}\label{W2(k)}
Let $W^2(k)$ denote square Wasserstein distance between $\mathcal{N}(\mu,\Sigma)$ and $\delta(x)$ at time k. Then $W^2$ distance at time k is given by
\begin{align}
W^{2}(k) = vec(I_n)\prod_{p=1}^{k}\left(A_{\sigma_p}\otimes A_{\sigma_p}\right)vec\left(\mu_0\mu_0^{\top}+\Sigma_0\right)
\end{align}
where $\mu_0$ and $\sigma_0$ are mean and covariance of initial Gaussian PDF.
\end{proposition}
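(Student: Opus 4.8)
The plan is to reduce the time-varying claim to the static identity in Proposition~\ref{GaussToDiracCorollary} by first tracking how the Gaussian state PDF evolves under the recursion \eqref{3}. Because \eqref{3} is linear along the realized switching sequence and $x(0)\sim\mathcal{N}(\mu_0,\Sigma_0)$, and a linear (indeed affine) image of a Gaussian is again Gaussian, we have $x(k)\sim\mathcal{N}(\mu(k),\Sigma(k))$ for every $k$. Writing $\Phi(k):=A_{\sigma_k}A_{\sigma_{k-1}}\cdots A_{\sigma_1}=\prod_{p=1}^{k}A_{\sigma_p}$ for the state-transition matrix of the realized sequence, a one-line induction on $k$ using $\mathbb{E}[x(k+1)]=A_{\sigma_k}\mathbb{E}[x(k)]$ and $\mathrm{Cov}(x(k+1))=A_{\sigma_k}\mathrm{Cov}(x(k))A_{\sigma_k}^{\top}$ gives $\mu(k)=\Phi(k)\mu_0$ and $\Sigma(k)=\Phi(k)\Sigma_0\Phi(k)^{\top}$.

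Next I would apply Proposition~\ref{GaussToDiracCorollary} at time $k$, which yields $W^{2}(k)=\|\mu(k)\|_{\ell_2}^{2}+\mathrm{tr}(\Sigma(k))$. The key algebraic step is to fold both terms into a single trace: since $\|\mu(k)\|_{\ell_2}^{2}=\mu(k)^{\top}\mu(k)=\mathrm{tr}\!\left(\mu(k)\mu(k)^{\top}\right)=\mathrm{tr}\!\left(\Phi(k)\,\mu_0\mu_0^{\top}\,\Phi(k)^{\top}\right)$ and $\mathrm{tr}(\Sigma(k))=\mathrm{tr}\!\left(\Phi(k)\,\Sigma_0\,\Phi(k)^{\top}\right)$, linearity of the trace gives $W^{2}(k)=\mathrm{tr}\!\left(\Phi(k)\,M\,\Phi(k)^{\top}\right)$ with $M:=\mu_0\mu_0^{\top}+\Sigma_0$.

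It then remains to convert this trace into the stated vectorized form using three standard identities: $\mathrm{tr}(X^{\top}Y)=vec(X)^{\top}vec(Y)$, $vec(ABC)=(C^{\top}\otimes A)\,vec(B)$, and the Kronecker mixed-product property $(AB)\otimes(CD)=(A\otimes C)(B\otimes D)$. The first (with $X=I_n$) turns $\mathrm{tr}(\Phi(k)M\Phi(k)^{\top})$ into $vec(I_n)^{\top}vec(\Phi(k)M\Phi(k)^{\top})$; the second turns $vec(\Phi(k)M\Phi(k)^{\top})$ into $\left(\Phi(k)\otimes\Phi(k)\right)vec(M)$; and the third, applied inductively, gives $\Phi(k)\otimes\Phi(k)=\prod_{p=1}^{k}\left(A_{\sigma_p}\otimes A_{\sigma_p}\right)$. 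Assembling these pieces produces exactly the claimed expression, with the transpose on the leading $vec(I_n)$ understood.

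The content here is essentially bookkeeping rather than a deep argument, so the main obstacle is keeping the ordering conventions consistent: the order in which the $A_{\sigma_p}$ multiply inside $\Phi(k)$ must match the order in which the factors $A_{\sigma_p}\otimes A_{\sigma_p}$ appear in $\prod_{p=1}^{k}$, and one must apply the mixed-product property so that both slots of the Kronecker product are reversed (or preserved) in the same way. A secondary point worth stating explicitly is the justification that the state PDF remains Gaussian for all $k$, so that Proposition~\ref{GaussToDiracCorollary} is applicable at every time; this follows because each $A_{\sigma_p}$ is a fixed matrix along the realized sequence and linear images of Gaussians are Gaussian.
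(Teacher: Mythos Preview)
Your proposal is correct and follows essentially the same approach as the paper: both invoke Proposition~\ref{GaussToDiracCorollary}, propagate the Gaussian mean and covariance through the linear recursion \eqref{3}, collapse $\|\mu(k)\|^{2}+\mathrm{tr}(\Sigma(k))$ into a single trace, and then use the identities $\mathrm{tr}(X^{\top}Y)=vec(X)^{\top}vec(Y)$ and $vec(ABC)=(C^{\top}\otimes A)vec(B)$. The only cosmetic difference is that the paper works one step at a time (from $k$ to $k{+}1$) and concludes by induction, whereas you first assemble the full state-transition matrix $\Phi(k)$ and then invoke the Kronecker mixed-product rule to factor $\Phi(k)\otimes\Phi(k)$ --- your explicit attention to the ordering convention in $\prod_{p=1}^{k}$ is a worthwhile clarification.
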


\begin{proof}
From \eqref{staticW}, $W^{2}$ at time $k+1$ is defined as
\begin{eqnarray}
W^{2}(k+1) =& \parallel \mu(k+1) \parallel ^{2} + \tr\left(\Sigma(k+1)\right)\\
=& \tr\left(\mu(k+1)\mu(k+1)^{\top}+\Sigma(k+1)\right)\label{9}
\end{eqnarray}
Note that $\mathcal{N}(\mu(k) ,\Sigma(k))$ remains Gaussian PDF for all time $k$, even in the mode switching between sublinear dynamics. The following property are used for updating mean and covaraince of Gaussian PDF.
\begin{eqnarray}
&\mu(k+1) = A_{\sigma_{k}} \mu(k)\label{mu_update}\\
&\Sigma(k+1) = A_{\sigma_{k}} \Sigma(k) A_{\sigma_{k}}^{\top}\label{sigma_update}
\end{eqnarray}
Substituting \eqref{mu_update} and \eqref{sigma_update} into \eqref{9}, we get
\begin{eqnarray}\label{21}
W^{2}(k+1) =\tr\left(A_{\sigma_{k}}^{\top}A_{\sigma_{k}}\left(\mu(k)\mu(k)^{\top}+\Sigma(k)\right)\right)
\end{eqnarray}
Using $\tr(X^{\top}Y) = vec(X)^{\top}vec(Y)$, \eqref{21} can be expressed as
\begin{align}\label{22}
W^{2}(k+1) =& 
vec(A_{\sigma_{k}}^{\top}I_nA_{\sigma_{k}})^{\top}
vec\left(\mu(k)\mu(k)^{\top}+\Sigma(k)\right)
\end{align}
Further, by applying $vec(ABC) = (C^{\top}\otimes A)vec(B)$ to the first term of right hand side in \eqref{22}, we get

\begin{align}\label{23}
  &\begin{aligned}
    W^{2}(k+1) &= vec(I_n)^{\top}\left(A_{\sigma_k}\otimes A_{\sigma_k}\right)\\
      \qquad\qquad&\qquad\qquad vec\left(\mu(k)\mu(k)^{\top}+\Sigma(k)\right)
  \end{aligned}
\end{align}

Similarly, $W^{2}$ at time k is also obtained as
\begin{eqnarray}\label{24}
W^{2}(k) = vec(I_n)^{\top}vec\left(\mu(k)\mu(k)^{\top}+\Sigma(k)\right)
\end{eqnarray}

From \eqref{23} and \eqref{24}, and by induction, we conclude that $W^{2}(k)$ can be expressed in terms of initial mean and covariance as follows.
\begin{align}\label{25}
W^{2}(k) = vec(I_n)^{\top}\prod_{p=1}^{k} \left(A_{\sigma_p}\otimes A_{\sigma_p}\right) vec\left(\mu_0\mu_0^{\top}+\Sigma_0)\right)
\end{align}
\end{proof}

We aim to find the switching sequence which guarantees the optimality of the system performance. One way of doing that is to minimize the area of Wasserstein distance, and hence minimize the time for the state PDF $\mathcal{N}\left(\mu (k),\Sigma (k)\right)$ to reach the reference PDF $\delta(x)$. In this case, we can formulate the cost function as
\begin{eqnarray}\label{17}
J(\sigma) = \int_{0}^{\infty}W^2 dt = \sum_{k=0}^{\infty}W^2(k)dk
\end{eqnarray}
where $dk$ is a sampling time for discrete-time system. We use discrete-time $W^2$, and hence equality between second and last equations in \eqref{17} holds. From the cost function in \eqref{17}, the optimal switching problem is defined as follows.

\noindent\textbf{Optimal Switching Problem}
\begin{align}\label{Jopt}
J(\sigma^{*})=&\min_{\sigma}J(\sigma)
\end{align}

The solution of the above optimal switching problem can be obtained by finding optimal switching sequence $\sigma^{*}=\{\sigma_1^*, \sigma_2^*, \cdots\}$ out of all switching possibilities. For example, if the terminal time is finite and is set to be $n$ instead of $\infty$ in \eqref{Jopt}, we have to check total $m^n$ switching sequences for optimal solution, where $m$ is total number of modes. Therefore, this problem is same as a conventional tree-search problem\cite{dakin1965tree}. Since the growth of tree size is exponential in time, this problem is extremely difficult to solve and it requires large computational time. More details with respect to issues on complexity are discussed in the last subsection. Therefore, we want to simplify the original problem by the next assumption.
\begin{assumption}\label{assump1}
For the jump linear system in \eqref{2}, switching sequence $\sigma$ is constant over given horizon $T$.
\end{assumption}

Using assumption \ref{assump1}, we can apply the receding horizon framework and the cost function over horizon length $T$ can be defined as
\begin{align}
J &= \sum_{k=t_j}^{T+t_j}W^{2}(k)\label{discrete_W}dk\\
&= \sum_{k=t_j}^{T+t_j}vec(I_n)\left(\prod_{p=1}^{k}\left(A_{\sigma_p}\otimes A_{\sigma_p}\right)\right)vec(\mu_0\mu_0^{\top}+\Sigma_0)\label{J_varying_sigma}dk\\
&= \sum_{k=t_j}^{T+t_j}vec(I_n)\left(\left(A_{\sigma}\otimes A_{\sigma}\right)^k\right)vec(\mu_0\mu_0^{\top}+\Sigma_0)\label{J_fixed_sigma}dk\\
&=\sum_{k=t_j}^{T+t_j}W_{\sigma}^{2}(k)dk
\end{align}

Switching sequence, denoted as $\sigma$, is fixed and we get \eqref{J_fixed_sigma} from \eqref{J_varying_sigma} for $\sigma_p=\sigma=$ constant under the assumption \ref{assump1}.

Then, the optimal cost-to-go function is defined as:

\noindent\textbf{Optimal Switching with Receding Horizon}
\begin{align}
J^{*} &= \min_{\sigma}\left(\sum_{k=t_j}^{T+t_j}W_{\sigma}^{2}(k)dk\right)\label{Jopt2}
\end{align}
\begin{align}
s.t.\quad 
W^{2}_{\sigma}(t_{j+1})&-W^{2}_{\sigma}(t_{j-1}) \leq -\epsilon_{\sigma}(t_j)\label{stab_cond}
\end{align}

where $\epsilon_{\sigma}(\cdot)$ is a positive definite function and the constraint \eqref{stab_cond} is enforced for stability. 

It is well known\cite{liberzon2003switching} that switching between individually stable modes can make a stable system unstable. Therefore, the constraint \eqref{stab_cond} should be enforced to ensure stability. Fig. \ref{fig:jump} shows schematic of optimal switching sequence using receding horizon framework. 
At time $t_j$, the solution of \eqref{Jopt2}-\eqref{stab_cond} provides the optimal switching sequence for horizon $T$ and there is no switching during time $k\in[t_j,t_{j+1})$. When time $k$ reaches $t_{j+1}$, we again compute optimal switching for next horizon $T$.

Note that although \eqref{stab_cond} implies piecewise monotone decreasing in $W^2$, it is not so restrictive condition because \eqref{stab_cond} is only applied to the time $t_j$ at which jump occurs. In other words, $W^{2}(\cdot)$ can increase in between times $k\in[t_j,t_{j+1})$ as depicted in Fig.\ref{fig:jump}.
\begin{figure}[h!] 
\begin{center}
\includegraphics[width=0.5\textwidth]{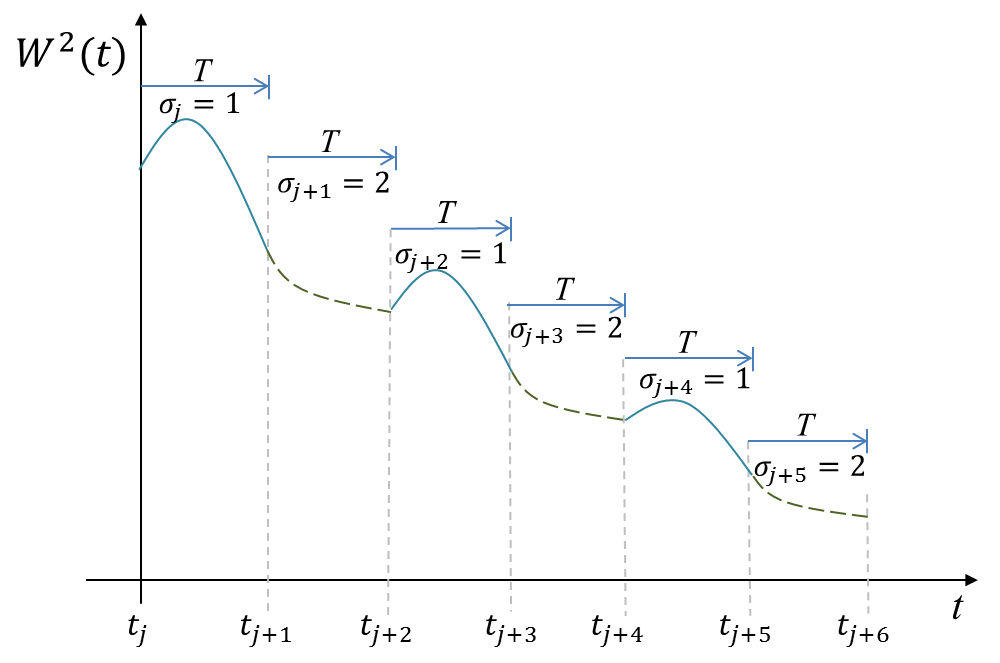}
\vspace{-0.2in}
\caption{Optimal Switching Strategy for the jump linear system using Receding Horizon Framework}
\label{fig:jump}
\end{center}
\end{figure}  

\subsection*{Stability Issues}
The reason we choose time $t_{j-1}$ and $t_{j+1}$ for piecewise monotone decreasing condition in \eqref{stab_cond} is as follows. Switching takes place at every time instance $t_j$. Between time $k\in[t_{j-1},t_j)$, there is no switching. At the end of the horizon $T$, which is at time $t_j$, we can compute the next optimal switching sequence for time $k\in[t_j,t_{j+1})$ using \eqref{Jopt2}-\eqref{stab_cond}. Since the individual subsystem is Schur stable, there is no stability problem if there is no switching. However, if jump occurs, there may be a bump in the state trajectory, and hence in $W^2$ right after the switching. This may cause instability of the jump linear system. Therefore, the constraint \eqref{stab_cond} which is sufficient condition for the stability should be enforced. The following lemma and theorem prove the stability of jump linear systems in the context of mean square sense under the receding horizon framework.

\begin{lemma}\label{piecewise}
For jump linear systems with the receding horizon framework \eqref{Jopt2}, $W^2(t_j)$ converges to zero under the constraint \eqref{stab_cond}, where $t_j$ is jump time.
\end{lemma}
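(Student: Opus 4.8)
The plan is to set aside the dynamics and the optimization and treat $\{W^{2}(t_j)\}_{j\ge 1}$ as a nonnegative scalar sequence constrained by \eqref{stab_cond}; convergence to zero then follows from a telescoping/summability argument. (We consider the nontrivial case of infinitely many switches; if only finitely many jumps occur, $W^{2}(k)\to 0$ is immediate from Schur stability of the terminal mode.) The first ingredient is that $W^{2}(k)=\|\mu(k)\|^{2}+\text{tr}(\Sigma(k))\ge 0$ for every $k$ by \eqref{staticW}, so the sequence of jump-time values is bounded below by $0$. Reading \eqref{stab_cond} as a bound on the realized sequence, $W^{2}(t_{j+1})\le W^{2}(t_{j-1})-\epsilon_{\sigma}(t_j)$ with $\epsilon_{\sigma}(t_j)\ge 0$; separating even and odd $j$, this says the two subsequences $\{W^{2}(t_{2i})\}_{i}$ and $\{W^{2}(t_{2i+1})\}_{i}$ are each non-increasing, hence convergent (to limits $L_{e},L_{o}\ge 0$), and in particular the whole jump-time sequence stays in a bounded set.

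The second step is the telescoping. Summing \eqref{stab_cond} over $j=1,\dots,2N$ and collapsing the even- and odd-indexed differences separately yields
\begin{align}
\sum_{j=1}^{2N}\epsilon_{\sigma}(t_j)
&\le W^{2}(t_0)+W^{2}(t_1)-W^{2}(t_{2N})-W^{2}(t_{2N+1})\notag\\
&\le W^{2}(t_0)+W^{2}(t_1).\notag
\end{align}
The right-hand side is finite and independent of $N$, so $\sum_{j\ge 1}\epsilon_{\sigma}(t_j)<\infty$, and therefore $\epsilon_{\sigma}(t_j)\to 0$ as $j\to\infty$.

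The third step converts $\epsilon_{\sigma}(t_j)\to 0$ into $W^{2}(t_j)\to 0$ via positive-definiteness of $\epsilon_{\sigma}$. Viewing $\epsilon_{\sigma}$ as a positive definite function of the state distribution $(\mu,\Sigma)$, its only zero is the collapsed distribution, equivalently the point where $W^{2}=0$. Using the boundedness of the jump-time iterates from step one, any subsequential limit $(\mu^{*},\Sigma^{*})$ of $(\mu(t_j),\Sigma(t_j))$ must satisfy $\epsilon_{\sigma}(\mu^{*},\Sigma^{*})=0$ by continuity, hence $(\mu^{*},\Sigma^{*})=(0,0)$; consequently $(\mu(t_j),\Sigma(t_j))\to(0,0)$ and $W^{2}(t_j)=\|\mu(t_j)\|^{2}+\text{tr}(\Sigma(t_j))\to 0$. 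Equivalently, if $L_{e}$ or $L_{o}$ were strictly positive, $\epsilon_{\sigma}(t_j)$ would be bounded away from $0$ along that subsequence, contradicting summability; so $L_{e}=L_{o}=0$.

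I expect the genuine difficulty to sit in this last step rather than in the telescoping: "$\epsilon_{\sigma}$ is positive definite" does not by itself force $W^{2}(t_j)\to 0$, because a positive definite function can decay to zero along an unbounded sequence of arguments; one must first use \eqref{stab_cond} itself to confine the iterates to a bounded set (as in step one) and only then invoke continuity on that set. A minor bookkeeping obstacle is that \eqref{stab_cond} links only every \emph{other} jump value, so the even/odd splitting must be carried consistently through the summation; the remaining manipulations are routine.
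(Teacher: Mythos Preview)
Your argument is correct and considerably more careful than the paper's. The paper's proof simply asserts that the jump-time sequence is ``piecewise monotone decreasing'' and bounded below by zero, and then claims---without further justification---that for any $N>0$ there exists $n_0$ with $W^{2}(t_{n_0})<N$, concluding convergence to zero; this is the monotone convergence theorem plus an unproved assertion that the limit is zero, and it also glosses over the fact that \eqref{stab_cond} compares $t_{j+1}$ with $t_{j-1}$ rather than $t_j$, so the full sequence need not be monotone. Your approach repairs both issues: you split into even and odd subsequences to recover genuine monotonicity, telescope \eqref{stab_cond} to obtain $\sum_{j}\epsilon_{\sigma}(t_j)<\infty$, and then use positive definiteness of $\epsilon_{\sigma}$ together with the boundedness established in step one to force the limits $L_{e},L_{o}$ to vanish. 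What your route buys is rigor---in particular an actual reason the limit is zero rather than merely nonnegative; what the paper's route buys is brevity, at the cost of leaving the reader to supply precisely the summability-plus-positive-definiteness reasoning you wrote out. One small point worth making explicit in your step three: since $\sigma$ takes only finitely many values in $\mathcal{I}$, any convergent subsequence of states has a further subsequence along which the mode is constant, so continuity can be invoked for a single $\epsilon_{\sigma^{*}}$.
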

\begin{proof}
For piecewise monotone decreasing sequence $W^{2}(\cdot)$, $\exists n_0\in \mathbb{Z}^+$ such that $W^{2}(t_{n_{0}}) < N$ and $N$ is any arbitrary positive real number $\mathbb{R}^{+}$. By the monotone decreasing condition above, for all $n>n_0$, $W^{2}(t_n)<N$. Since $N$ is any arbitrary positive real number $\mathbb{R}^{+}$ and the lower bound of $W^{2}(\cdot)$ is 0, $W^{2}(t_j) \rightarrow 0$ as $j \rightarrow \infty$.
\end{proof}

Lemma \ref{piecewise} proves piecewise convergence of $W^{2}$ under the constraint given in \eqref{stab_cond}. Although $W^{2}(t_j)$ converges to zero, it does not necessarily guarantee no oscillation at time $k\in[t_j, t_{j+1})$. Therefore, we have to show that if $W^{2}(t_j)\rightarrow0$, then $W^2(k)$ is also zero for all $t\in[t_j, t_{j+1})$. The following lemma proves the above argument.

\begin{lemma}\label{no_osc}
Once $W^{2}(t_j) = 0$ at time $t_j$, then $W^{2}(k)$ is always zero for all $ k \geq t_j$.
\end{lemma}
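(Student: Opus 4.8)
The plan is to translate the hypothesis $W^{2}(t_j)=0$ into vanishing of the mean \emph{and} the covariance at time $t_j$, and then push this forward through the linear propagation maps \eqref{mu_update}--\eqref{sigma_update}. First I would rewrite $W^{2}(k)$ in the form already available from \eqref{staticW} (equivalently \eqref{24}),
\[
W^{2}(k) = \|\mu(k)\|^{2} + \text{tr}\left(\Sigma(k)\right),
\]
and observe that this is a sum of two non-negative terms: $\|\mu(k)\|^{2}\ge 0$ trivially, and $\text{tr}(\Sigma(k))\ge 0$ because $\Sigma(k)$, being a covariance matrix, is symmetric positive semidefinite and its trace is the sum of its non-negative eigenvalues. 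Hence $W^{2}(t_j)=0$ forces both terms to vanish, i.e.\ $\mu(t_j)=0$ and $\text{tr}(\Sigma(t_j))=0$; and $\text{tr}(\Sigma(t_j))=0$ together with positive semidefiniteness forces every eigenvalue of $\Sigma(t_j)$ to be zero, so $\Sigma(t_j)=0$.

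Next I would argue by induction on $k\ge t_j$ that $\mu(k)=0$ and $\Sigma(k)=0$. The base case $k=t_j$ is exactly what was just established. For the inductive step, apply the update rules \eqref{mu_update} and \eqref{sigma_update}: for whichever mode $\sigma_k\in\mathcal{I}$ is active, $\mu(k+1)=A_{\sigma_k}\mu(k)=A_{\sigma_k}\cdot 0 = 0$ and $\Sigma(k+1)=A_{\sigma_k}\Sigma(k)A_{\sigma_k}^{\top}=A_{\sigma_k}\cdot 0\cdot A_{\sigma_k}^{\top}=0$. This closes the induction (and, notably, the conclusion does not depend on the switching signal chosen after $t_j$). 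Substituting back, $W^{2}(k)=\|\mu(k)\|^{2}+\text{tr}(\Sigma(k))=0$ for every $k\ge t_j$, which is the claim.

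There is no genuinely hard step here; the only point that requires a moment's care is the implication $\text{tr}(\Sigma(t_j))=0 \Rightarrow \Sigma(t_j)=0$, which relies on $\Sigma(t_j)$ being positive semidefinite and would fail for an arbitrary symmetric matrix. Everything else is the non-negativity of a sum of squares/traces and a one-line induction using linearity of the propagation maps. Combined with Lemma \ref{piecewise}, which gives $W^{2}(t_j)\to 0$ along the jump instants, this lemma upgrades the conclusion to $W^{2}(k)\to 0$ along the entire discrete time axis, which is precisely what is needed to invoke Proposition \ref{WConvMeanSqConvProposition} for mean-square stability of the jump linear system under the receding-horizon switching law.
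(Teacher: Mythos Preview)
Your argument is correct and follows essentially the same approach as the paper's proof: use \eqref{staticW} to conclude that $W^{2}(t_j)=0$ forces $\mu(t_j)=0$ and $\Sigma(t_j)=0$, then propagate these zeros forward via \eqref{mu_update}--\eqref{sigma_update}. If anything, you are more careful than the paper, which asserts ``both mean and covariance have to be zero'' without spelling out the positive-semidefiniteness step you highlight.
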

\begin{proof}
From \eqref{staticW}, in order for $W^{2}(t_j)$ to be zero, both mean and covariance have to be zero. According to \eqref{mu_update} and \eqref{sigma_update}, used for updating mean and covariance, they remain zero for all $k\geq t_j$ once they become zero.
\end{proof}

Using Lemma \ref{piecewise} and Lemma \ref{no_osc}, following theorem shows the m.s stability of jump system under the proposed switching policy.
\begin{theorem}
Jump linear systems in \eqref{3} under the receding horizon framework \eqref{Jopt2}-\eqref{stab_cond} is m.s. stable.
\end{theorem}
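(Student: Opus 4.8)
The plan is to promote the subsequence convergence of Lemma~\ref{piecewise} to convergence of the \emph{entire} sequence $\{W^{2}(k)\}_{k\ge0}$, and then to read off mean-square stability from Proposition~\ref{WConvMeanSqConvProposition}. Lemma~\ref{piecewise} gives $W^{2}(t_j)\to0$ along the jump times $\{t_j\}$, while Lemma~\ref{no_osc} disposes of the degenerate case in which $W^{2}(t_j)=0$ exactly at some finite $t_j$ (so that $W^{2}(k)\equiv0$ thereafter). What remains is to control $W^{2}$ strictly inside a horizon: one must show that $\sup_{t_j\le k<t_{j+1}}W^{2}(k)$ is dominated by a fixed multiple of $W^{2}(t_j)$, uniformly in $j$.

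The key step is this in-horizon bound. On the horizon that begins at $t_j$ the active mode $\sigma$ is constant (Assumption~\ref{assump1}), so iterating \eqref{mu_update}--\eqref{sigma_update} gives $\mu(t_j+\ell)=A_{\sigma}^{\ell}\mu(t_j)$ and $\Sigma(t_j+\ell)=A_{\sigma}^{\ell}\Sigma(t_j)(A_{\sigma}^{\ell})^{\top}$ for $0\le\ell\le T$. Together with \eqref{staticW} and the elementary estimate $\text{tr}(B\Sigma B^{\top})\le\|B\|^{2}\,\text{tr}(\Sigma)$, valid for $\Sigma\succeq0$, this yields $W^{2}(t_j+\ell)=\|\mu(t_j+\ell)\|^{2}+\text{tr}\,\Sigma(t_j+\ell)\le\|A_{\sigma}^{\ell}\|^{2}(\|\mu(t_j)\|^{2}+\text{tr}\,\Sigma(t_j))=\|A_{\sigma}^{\ell}\|^{2}\,W^{2}(t_j)$. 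Since every $A_{\sigma}$ is Schur, $\sup_{\ell\ge0}\|A_{\sigma}^{\ell}\|<\infty$, and as $\mathcal{I}$ is finite the constant $C:=\max_{\sigma\in\mathcal{I}}\sup_{\ell\ge0}\|A_{\sigma}^{\ell}\|^{2}$ is finite and independent of $j$; hence $W^{2}(k)\le C\,W^{2}(t_j)$ for all $k\in[t_j,t_{j+1})$.

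Then I would conclude. Given $\eta>0$, Lemma~\ref{piecewise} supplies $j_0$ with $W^{2}(t_j)<\eta/C$ for all $j\ge j_0$, and the bound above forces $W^{2}(k)<\eta$ for every $k\ge t_{j_0}$; thus $W^{2}(k)\to0$ as $k\to\infty$. Since $W^{2}(k)=W^{2}\!\left(\mathcal{N}(\mu(k),\Sigma(k)),\,\delta(x)\right)$ by \eqref{staticW}, this is exactly distributional convergence of the state PDF to the Dirac measure in the Wasserstein metric, so Proposition~\ref{WConvMeanSqConvProposition} yields $\mathbb{E}\!\left[\|x(k)\|^{2}\right]\to0$ --- i.e., the jump linear system \eqref{3} under the receding-horizon framework \eqref{Jopt2}--\eqref{stab_cond} is m.s.\ stable.

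I expect the in-horizon bound to be the crux. Claiming that the transient bump of $W^{2}$ just after a switch is no worse than a single constant times $W^{2}$ at the switching time relies on the horizon $T$ and the mode set $\mathcal{I}$ being finite, and is made uniform and clean by Schur stability of each individual mode; if any of these fails the argument collapses to the weaker, merely-subsequential statement of Lemma~\ref{piecewise}. The positive semidefiniteness of $\Sigma(t_j)$, which underlies the $\text{tr}(B\Sigma B^{\top})\le\|B\|^{2}\text{tr}(\Sigma)$ step, is the secondary ingredient that keeps the whole estimate expressed purely in terms of $W^{2}$.
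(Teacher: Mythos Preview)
Your argument is correct and, in fact, more complete than the paper's. The paper's proof simply asserts that ``from Lemma~\ref{piecewise} and Lemma~\ref{no_osc} it is shown that $W^{2}(\cdot)$ converges to zero'' and then invokes Proposition~\ref{m.s.stable}. But Lemma~\ref{piecewise} only yields $W^{2}(t_j)\to 0$ along the switching instants, and Lemma~\ref{no_osc} as stated treats only the case where $W^{2}(t_j)$ is \emph{exactly} zero at some finite $t_j$; neither lemma controls the size of $W^{2}(k)$ for $k\in(t_j,t_{j+1})$ when $W^{2}(t_j)$ is merely small. The paper glosses over precisely the step you supply.

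Your in-horizon estimate $W^{2}(t_j+\ell)\le \|A_{\sigma}^{\ell}\|^{2}\,W^{2}(t_j)$, obtained from \eqref{mu_update}--\eqref{sigma_update} together with $\text{tr}(B\Sigma B^{\top})\le\|B\|^{2}\,\text{tr}(\Sigma)$ for $\Sigma\succeq 0$, is the right way to close this gap: Schur stability of each $A_{\sigma}$ gives $\sup_{\ell\ge 0}\|A_{\sigma}^{\ell}\|<\infty$, and finiteness of $\mathcal{I}$ makes the constant $C$ uniform in $j$. This promotes subsequence convergence to full convergence, after which Proposition~\ref{m.s.stable} applies exactly as in the paper. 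One small remark on your closing commentary: your bound does not actually require a finite horizon $T$, since you already take $\sup_{\ell\ge 0}$; Schur stability alone carries that step, which is relevant because the paper allows $T$ to vary with $j$ (Theorem~\ref{theorem4.2}).
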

\begin{proof}
By Proposition \ref{m.s.stable}, the system is m.s. stable if and only if $W(\cdot)=0$. From Lemma \ref{piecewise} and Lemma \ref{no_osc}, it is shown that $W^{2}(\cdot)$ converges to zero, and hence $W(\cdot)$ also converges to zero.  Therefore, jump linear system in \eqref{3} is m.s. stable.
\end{proof}

\subsection*{Horizon Length Issues}
Primbs \textit{et al.}\cite{primbs2000receding} have shown the unified framework between pointwise min-norm($T=0$), optimality($T=\infty$), and receding horizon $T$. The horizon length $T$ can vary according to available time for online computation and in general, we can attain better results for longer horizon length $T$. However, unlike receding horizon control, longer horizon length $T$ for optimal switching does not imply better performance of jump systems. The effect of different receding horizon length $T$ in optimal switching can be analysed as follows.
\begin{enumerate}
\item \textit{Pointwise minimum ($T=0$)}:
When $T=0$, the solution of optimal switching problem is obtained by solving \eqref{Jopt2}-\eqref{stab_cond} with $T=0$. This is equivalent to finding pointwise minimum of $W^{2}_{\sigma_k}(k)$ at every time $k$. However, since there is no prediction for the future behaviour of the system, pointwise minimum does not guarantee the optimal switching of jump systems. Therefore, it may cause worse performance than good or even poor controller itself without switching.
\item \textit{Infinite horizon ($T=\infty$)}:
In case of infinite horizon, the optimal switching problem is trivial. By assumption \ref{assump1}, switching does not occur over infinite horizon. Therefore, the solution of the optimal switching is to choose single mode which achieves the minimum area of $W^2$ from time $k=0$ to $\infty$.
\end{enumerate}

From the above fact, receding horizon length $T$ should be $0< T < \infty$. However, there is no guideline for the optimal horizon length $T$. One necessary condition for $T$ is that it has to be chosen to satisfy the stability constraint in \eqref{stab_cond}. For instance, if jump occurs at time $t_j$ and as a result there might be a bump right after the switching, then the constraint\eqref{stab_cond} may not be satisfied for short $T$. Therefore, we can set the receding horizon length $T$ as follows.

\begin{theorem}\label{theorem4.2}
For optimal switching problem with receding horizon framework in \eqref{Jopt2}-\eqref{stab_cond}, the receding horizon length $T$ has to be set to satisfy stability constraint \eqref{stab_cond} and such that,
\begin{eqnarray}
T \geq \tau_j := t_{j+1} - t_{j-1}\label{tau}
\end{eqnarray}
where $\tau_j$ is updating time interval for receding horizon, and there always exist $\tau_j$ satisfies stability constraint \eqref{stab_cond} under the assumption that each dynamics is Schur stable.
\end{theorem}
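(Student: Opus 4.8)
The plan is to split the statement into its two assertions and dispatch them in turn. The first assertion --- that the horizon must satisfy $T\ge\tau_j$ --- is a well-posedness requirement: the stability constraint \eqref{stab_cond} couples the Wasserstein distance at the previous switching instant $t_{j-1}$ with the one at the next switching instant $t_{j+1}$, so the single receding-horizon optimization \eqref{Jopt2}--\eqref{stab_cond} that is tasked with enforcing \eqref{stab_cond} must use a prediction window covering the whole interval $[t_{j-1},t_{j+1}]$, whose length is exactly $\tau_j=t_{j+1}-t_{j-1}$. Hence $T\ge\tau_j$; no computation is involved, only the observation that every quantity appearing in \eqref{stab_cond} must lie inside the predicted window.

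For the second assertion --- existence of a finite $\tau_j$ that makes \eqref{stab_cond} feasible --- I would proceed as follows. At the jump time $t_j$ the value $W^2(t_{j-1})$ is an already-realized nonnegative number; since $\epsilon_\sigma(\cdot)$ is a positive definite function it may be taken smaller than $W^2(t_{j-1})$ whenever the latter is positive, while the case $W^2(t_{j-1})=0$ is trivial by Lemma \ref{no_osc}. It is then enough to produce \emph{one} mode $\sigma\in\mathcal I$ and one integer $\tau_j$ with $W^2_\sigma(t_{j+1})\le W^2(t_{j-1})-\epsilon_\sigma(t_j)$, since this renders the feasible set of \eqref{Jopt2}--\eqref{stab_cond} nonempty and the optimizer then selects the best feasible mode. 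Resetting the mean and covariance to their values at $t_j$ and applying Proposition \ref{W2(k)}, the distance along a constant mode $\sigma$ obeys $W^2_\sigma(k)=vec(I_n)^{\top}\left(A_\sigma\otimes A_\sigma\right)^k vec\!\left(\mu(t_j)\mu(t_j)^{\top}+\Sigma(t_j)\right)$. Since every $A_\sigma$ is Schur, $\rho(A_\sigma\otimes A_\sigma)=\rho(A_\sigma)^2<1$, so $\left(A_\sigma\otimes A_\sigma\right)^k\to 0$ geometrically and $W^2_\sigma(k)\to 0$, while $W^2_\sigma(k)\ge 0$ for all $k$. Therefore there is a finite $K_\sigma$ with $W^2_\sigma(k)\le W^2(t_{j-1})-\epsilon_\sigma(t_j)$ for all $k\ge K_\sigma$; choosing $\tau_j$ (equivalently $t_{j+1}$) so that $t_{j+1}-t_j\ge K_\sigma$ gives the required inequality, and then any $T\ge\tau_j$ makes \eqref{stab_cond} simultaneously well posed and feasible.

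The step I expect to require the most care is reconciling the coupling in \eqref{stab_cond} with the post-switch ``bump'': on $[t_{j-1},t_j)$ the previously committed mode may raise the distance so that $W^2(t_j)>W^2(t_{j-1})$, and the newly chosen mode must absorb this overshoot \emph{and} push $W^2$ strictly below $W^2(t_{j-1})-\epsilon_\sigma(t_j)$ within the remaining $\tau_j-(t_j-t_{j-1})$ steps. The geometric contraction rate $\rho(A_\sigma)^2<1$ supplied by Proposition \ref{W2(k)} is precisely what guarantees that a finite such $\tau_j$ exists for every bounded overshoot; and because $\rho(A_\sigma)$ is fixed for each of the finitely many modes, one can in fact take a single horizon $T$ no smaller than the worst-case $\tau_j$ over $\mathcal I$, so the bound $T\ge\tau_j$ can be met uniformly in $j$.
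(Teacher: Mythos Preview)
Your proposal is correct and, if anything, more careful than the paper's own argument, but the route differs. The paper does not invoke Proposition~\ref{W2(k)} or the Kronecker-product spectral radius $\rho(A_\sigma\otimes A_\sigma)=\rho(A_\sigma)^2<1$; instead it argues pathwise: since there is no switching over the horizon (Assumption~\ref{assump1}) and each $A_\sigma$ is Schur, the closed loop is globally uniformly asymptotically stable, hence there exists $t_{j+1}>t_{j-1}$ with $\|x(t_{j+1})\|<\|x(t_{j-1})\|$, and then squaring and taking expectation yields $W^2(t_{j+1})<W^2(t_{j-1})$, so \eqref{stab_cond} holds for some positive definite $\epsilon_\sigma$. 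Your approach stays entirely on the level of the first two moments via the explicit $W^2$ formula and therefore sidesteps the paper's implicit (and slightly delicate) step of passing a pathwise strict inequality through the expectation uniformly in the random initial state; it also makes the dependence on the overshoot at $t_j$ and the worst-case mode explicit, which the paper's one-line GUAS argument leaves tacit. Conversely, the paper's version is shorter and does not require the vectorization machinery, relying only on the identification $W^2=\mathbb{E}[\|x\|^2]$ from \eqref{ms_stab}. Both arguments establish the same existence claim; your version is the more self-contained one within the paper's Wasserstein framework.
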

\begin{proof}
From Assumption \ref{assump1} and that individual systems are Schur stable, there is no switching in fixed horizon $T$. For linearly stable system, which is globally uniformly asymptotically stable, there exists time $t_{j+1}$ such that $\|x(t_{j+1})\| < \|x(t_{j-1})\|$ for  $t_{j+1}>t_{j-1}$. By taking square and expectation for both side of above equation, we get $W^2(t_{j+1}) < W^2(t_{j-1})$. Therefore, stability constraint \eqref{stab_cond} is satisfied with some positive definite function $\epsilon_j$.
\end{proof}

Note that the horizon length $T$ is not necessarily to be constant. For each different jump time $t_j$ we can set a different horizon length $T$, satisfying the condition given in Theorem \ref{theorem4.2}.
 
\subsection*{Complexity Issues}
Two problems associated with the original optimal switching problem \eqref{Jopt} give rise to complexity issues. First, infinite time causes infinite size in total possible numbers of switching. Second, even if the switching is finite and hence \eqref{Jopt} is equivalent to tree-search problem\cite{dakin1965tree}, the computational complexity to solve this problem is NP-complete\cite{arnborg1987complexity}.

However, the optimal switching with receding horizon framework in \eqref{Jopt2}-\eqref{stab_cond} enable us to simplify the problem. Once the horizon length $T$ satisfying \eqref{stab_cond} is obtained, then the solution of optimal switching problem is same with choosing $\min\{W_1^2,W_2^2,\cdots,W_m^2\}$, where $m$ is total number of modes. Hence, this is same with sorting problems, where computational complexity is $O(n\log n)$ in general. As a consequence, optimal switching with receding horizon can be solved fast enough for online computation.


\section*{EXAMPLES}
\subsection*{Jump Linear System with Five different modes dynamics}
Consider a following discrete-time jump linear system.
\begin{align*}
x(k+1) = A_{\sigma}x(k), \qquad \sigma\in\{1,2,\hdots,5\}
\end{align*}
This system has a five different mode dynamics given by
\begin{eqnarray*}
&A_1 = \begin{bmatrix}
		  1.01 & -0.17 \\
		  0.32 & -0.48
\end{bmatrix},\:
A_2 = \begin{bmatrix}
0.06 & 0.80\\
0.01 & -0.77
\end{bmatrix},\:
A_3 = \begin{bmatrix}
0.72 & 0.48\\
0 & 0.55
\end{bmatrix},\\
&A_4 = \begin{bmatrix}
-0.33 & -0.65\\
-0.46 & 0.69
\end{bmatrix},\quad
A_5 = \begin{bmatrix}
-0.13 & 0.12\\
-1.33 & -1.05
\end{bmatrix}
\end{eqnarray*}

In addition, we assume that initial state has an uncertainty represented by Gaussian PDF with mean $\mu_0$ and covariance $\Sigma_0$ as follows.
\begin{align*}
\mu_0=[5,5]^{\top},\qquad
\Sigma_0=\begin{bmatrix}
2.25 & 0\\
0 & 2.25
\end{bmatrix}
\end{align*}.

For this system with given Gaussian initial state PDF, we aim to design a switching sequence that attains the optimality of the system performance. The simulation results are shown in Fig. \ref{fig:ex1(a)}. The cross mark represents the mode that is used at a specific switching sequence. According to this result, jump system shows the fastest convergence to the origin under the proposed receding horizon framework. 
The spectral radius for individual mode dynamics are $\rho(A_1) = 0.97$, $\rho(A_2) = 0.78$, $\rho(A_3)=0.72$, $\rho(A_4)=0.93$, and $\rho(A_5)=0.82$. From this, we know that $A_3$ dynamics converges to the origin faster than other mode dynamics. However, since other mode dynamics shows good performance in the beginning, it is desirable to use that mode dynamics initially. This intuition coincide with the optimal switching results as shown in Fig.\ref{fig:ex1(a)}. Additionally, total $W^2$ area that stands for the system performance, is depicted in Fig.\ref{fig:ex1(b)} to compare the performance between different mode dynamics and a jump system. In Fig.\ref{fig:ex1(b)}, $A_2$ mode shows the minimal $W^2$ area between individual dynamics without switching. The jump system using optimal switching synthesis shows about 3.5 times less $W^2$ area compared to $A_2$ mode that attains the best performance between the individual mode dynamics. In this example, the optimal switching synthesis provided in this paper shows the best performance and beats any other mode dynamics without switching.

\begin{figure}[!ht] 
\begin{center}
\subfigure[$W^2$ distance and optimal switching sequence $\sigma$]{\includegraphics[width=0.45\textwidth]{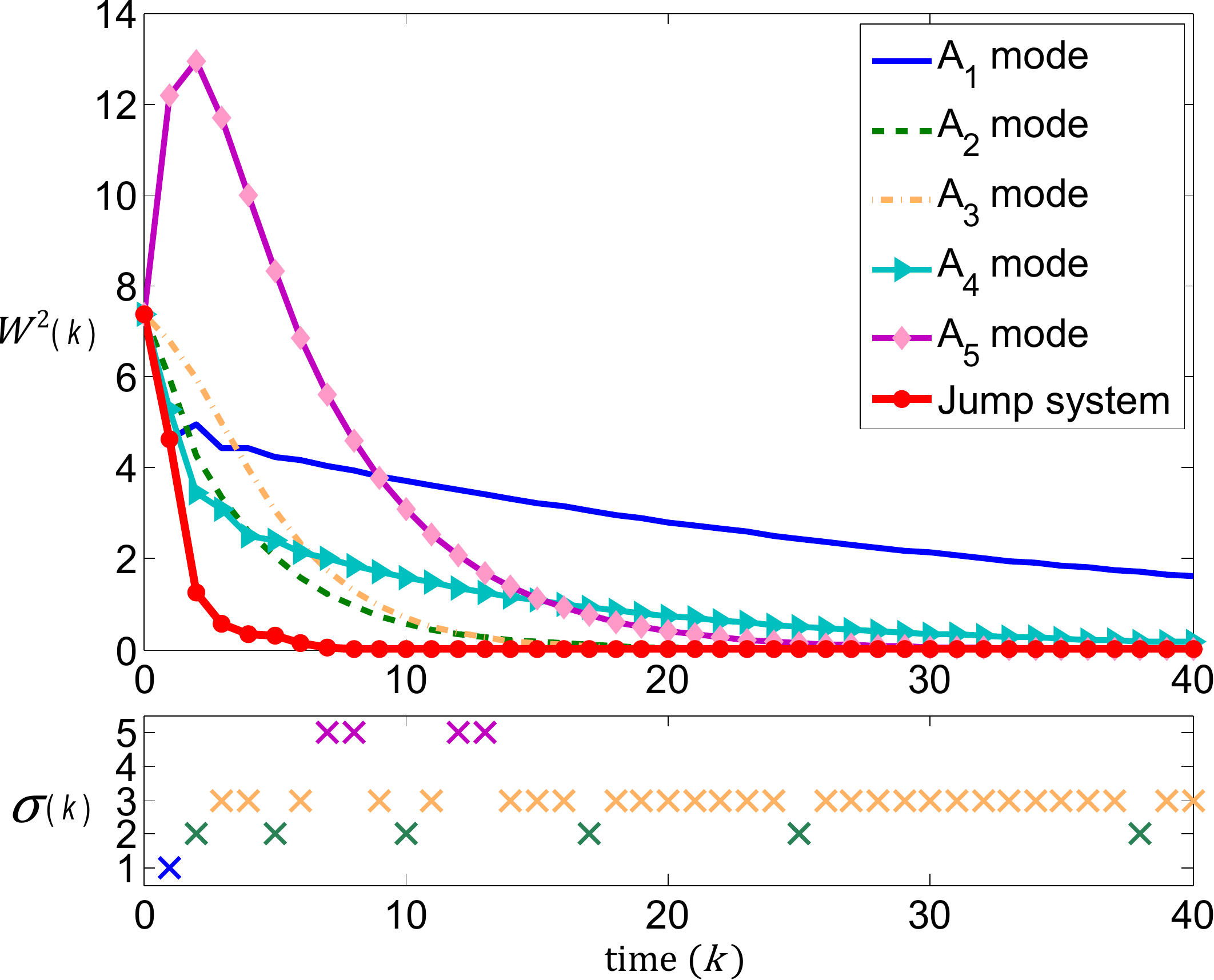}\label{fig:ex1(a)}}
\subfigure[Total area of $W^2$ for each mode and Jump system]{\includegraphics[width=0.42\textwidth]{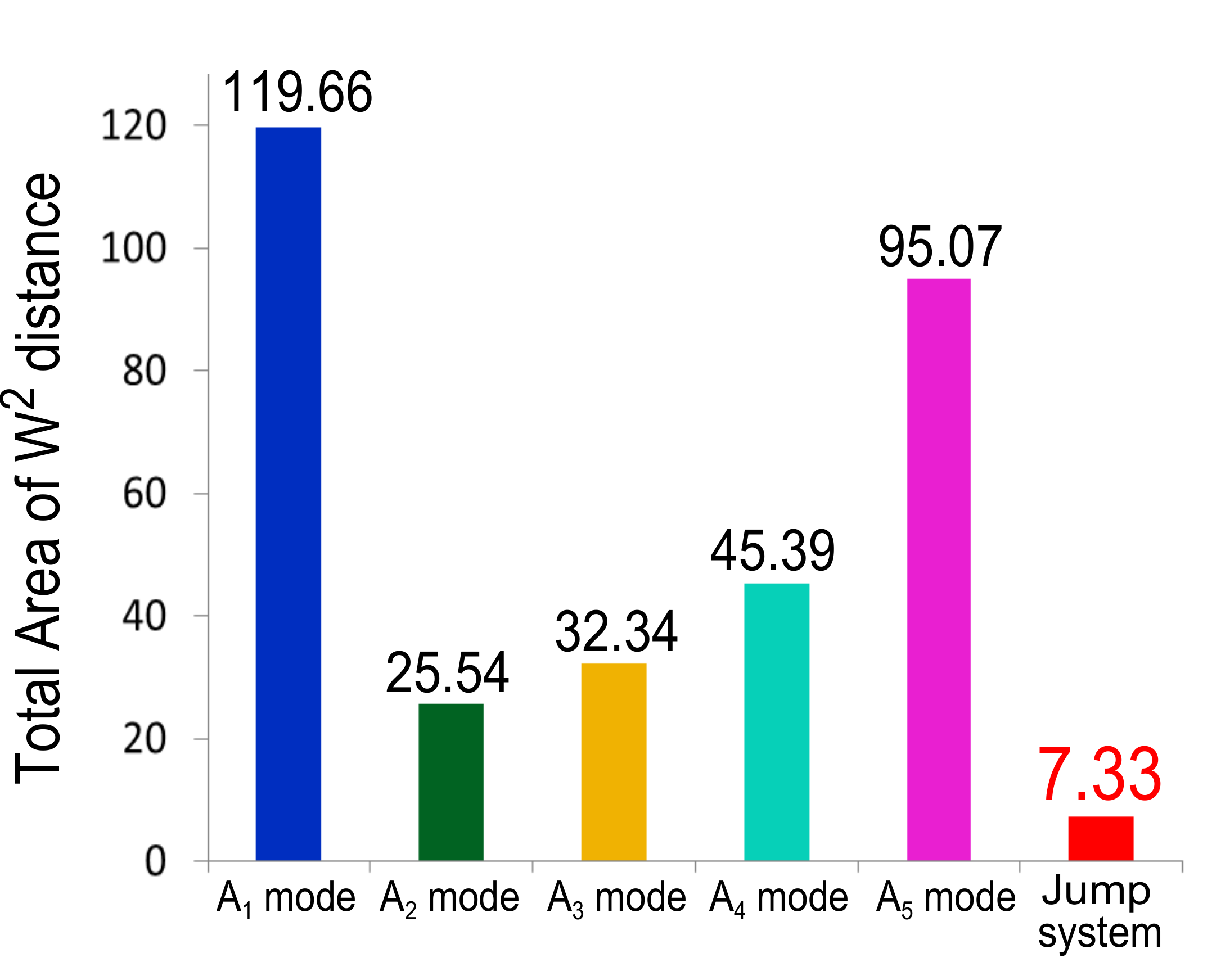}\label{fig:ex1(b)}}
\caption{Simulation results of Optimal Switching Synthesis for the Jump Linear System that has 5 different modes}
\label{fig:ex1}
\end{center}
\end{figure}

\subsection*{Linearized Quadrotor Dynamics with Two Controllers}
Here we consider 6-state linearized nonlinear quadrotor dynamics.  The first controller ($C_{High}$)  provides higher performance by commanding aggressive control actions and is designed using full-state feedback.  The second controller is a lead-lag compensator ($C_{Low}$) which provides poorer performance by commanding less aggressive control actions. Implementation of $C_{High}$ requires more computational time and consumes more energy (batttery) and $C_{Low}$ is resource economical in terms of both CPU time and energy usage. More details about this controller can be found in \cite{kottenstette2009digital}. In this example, we want to design the optimal switching sequence using both $C_{high}$ and $C_{low}$ to obtain better performance.

The states of the quadrotor are $x = [\phi,\theta,\psi, p, q, r]^\top$ and nonlinear dynamics is given by
\begin{align*}
&\dot{p} = \frac{qr(I_{yy}-I_{zz}) + qJ_{r}\Omega_{r} + bl(-\Omega_{2}^{2}+\Omega_{4}^{2})}{I_{xx}},\\
&\dot{q} = \frac{pr(I_{zz}-I_{xx}) - pJ_{r}\Omega_{r} + bl(\Omega_{1}^{2}-\Omega_{3}^{2})}{I_{yy}},\\
&\dot{r} = \frac{pq(I_{xx}-I_{yy}) + d(-\Omega_{1}^{2}+\Omega_{2}^{2}-\Omega_{3}^{2}+\Omega_{4}^{2})}{I_{zz}},
\end{align*}

\[
\begin{pmatrix}
\dot{\phi}\\
\dot{\theta}\\
\dot{\psi}\\
\end{pmatrix}
=
\begin{pmatrix}
1	&	\sin(\phi)\tan(\theta)	& \cos(\phi)\tan(\theta)\\
0	&	\cos(\phi)				& -\sin(\phi)\\
0	&	\sin(\phi)\sec(\theta)	& \cos(\phi)\sec(\theta)
\end{pmatrix}
\begin{pmatrix}
p\\q\\r
\end{pmatrix},
\]
where symbols are defined in Table \ref{table_quad}. 
\begin{table}[h]
\begin{center}
\caption{Nomenclature for Quadrotor Dynamics}\label{table_quad}
\label{table_quadrotor_nomenclature}
  \begin{tabular}{|c|c|c|c|}\hline
  Symbol & definition & Symbol & definition\\\hline
  $\phi$ & roll angle & $p$ & roll rate\\  
  $\theta$ & pitch angle & $q$ & pitch rate\\
  $\psi$ & yaw angle & $r$ & yaw rate\\
  $I_{xx,yy,zz}$ & body inertia &   $J_{r}$ & rotor inertia\\
  $b$ & thrust factor &  $d$ & drag factor\\ 
  $l$ & lever & $\Omega_{r}$ & rotor speed
  \\\hline
  \end{tabular}
\end{center}
 \end{table}

Linearized quadrotor dynamics is obtained by linearizing the nonlinear equations of motion about hover. Two continuous-time closed-loop systems $A_1$ and $A_2$ are discretized with sampling time $0.01$s. The switching policy determines the sequence for $\sigma$, which is deterministic.

The initial condition uncertainty is assessed with respect to initial condition uncertainty given by Gaussian PDF $\mathcal{N}(\mu_{0},\Sigma_{0})$, with $\mu_{0}=[0.5,-1.5,-5,0.1,0.2,0.1]^{\top}$ and $\Sigma_0=0.0225\times I_{6\times 6}$, where $I_{6\times 6}$ is the $6\times 6$ identity matrix. The control objective is to maintain hover, which corresponds to equilibrium state $x_{eq}=[0,0,0,0,0,0]^{\top}$.

\begin{figure}[!ht] 
\begin{center}
\subfigure[$W^2$ distance and optimal switching sequence $\sigma$]{\includegraphics[width=0.45\textwidth]{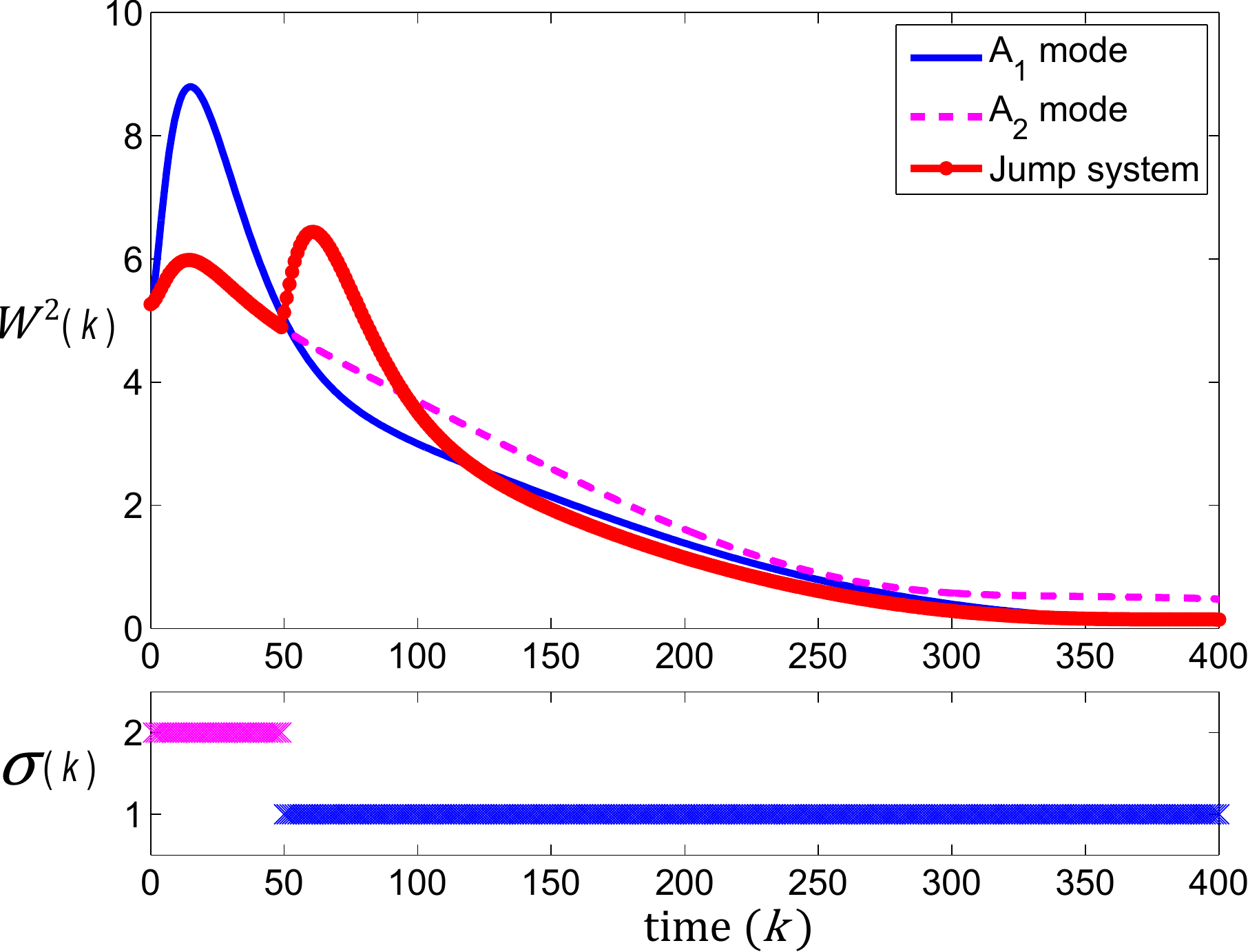}\label{fig:ex2(a)}}
\subfigure[Total area of $W^2$ for each mode and Jump system]{\includegraphics[width=0.42\textwidth]{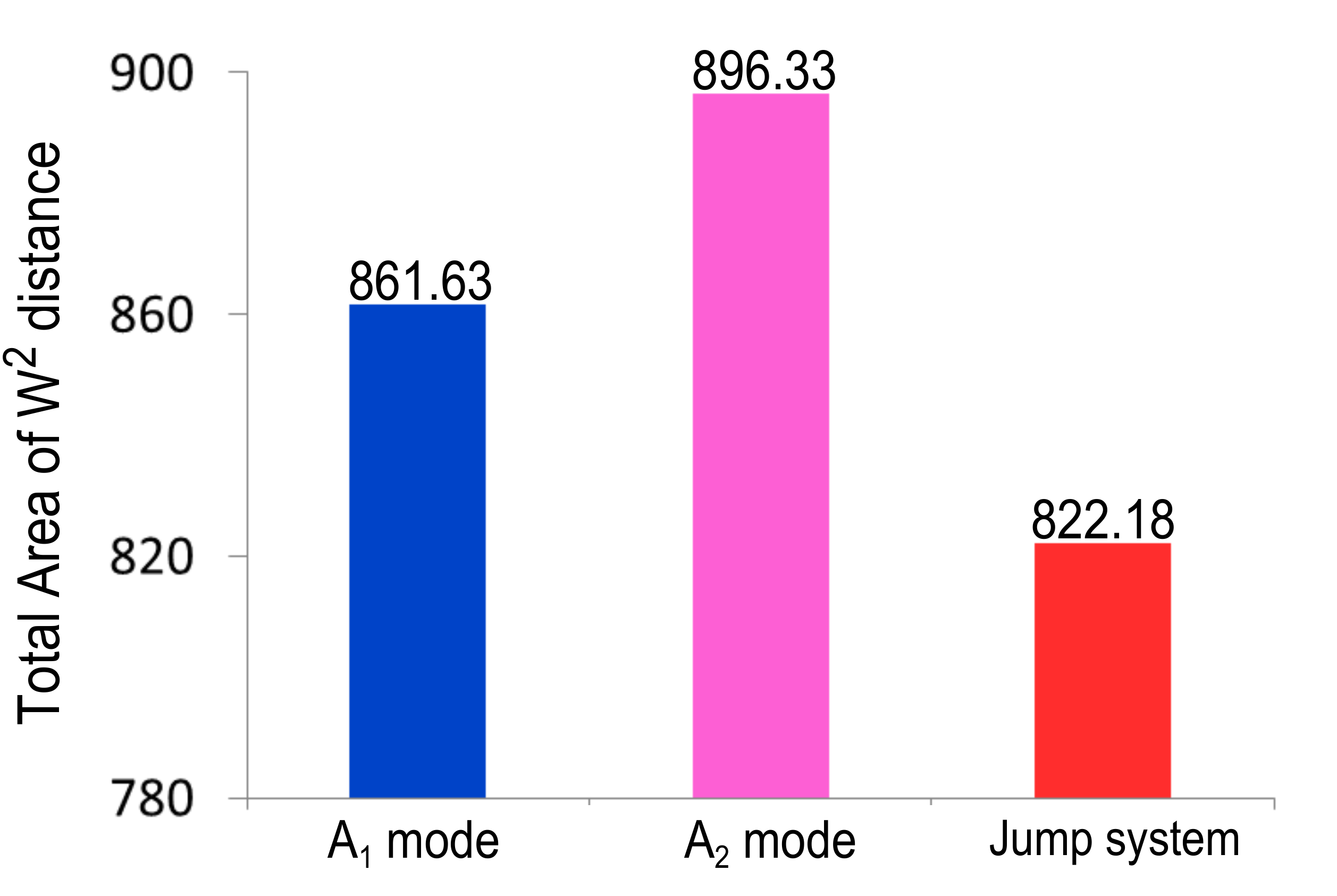}\label{fig:ex2(b)}}
\caption{Simulation results of Optimal Switching for Linearized quadrotor dynamics}
\label{fig:ex2}
\end{center}
\end{figure}

Fig. \ref{fig:ex2} presents the result of switching synthesis using proposed method in this paper. From the beginning in Fig. \ref{fig:ex2(a)}, $A_1$ dynamics shows large elevation in $W^2$ distance while $A_2$ does not. As a result, the optimal switching with receding horizon selects $A_2$ dynamics. However, after $k=50$, an optimality is obtained by switching to $A_1$ via optimal switching synthesis proposed in this paper. Fig. \ref{fig:ex2(b)} presents the performance of each mode in terms of total $W^2$ area.  It is clear that the lowest area, which is the best performance, can be attained by switching.

%
%


\section*{CONCLUSION}
In this paper, we proposed the optimal switching synthesis for jump linear systems with Gaussian initial state uncertainty. The Wasserstein metric that defines a distance between PDFs was adopted to measure both the performance and the stability of the jump linear system. We showed that the optimality of the system performance can be obtained by synthesizing switching laws via minimization of objective function expressed in terms of Wasserstein distance. Also, the mean square stability of the jump linear system was guaranteed under the proposed switching synthesis. The efficiency and the usefulness of the proposed methods were demonstrated by examples.

\bibliographystyle{asmems4}

\begin{acknowledgment}
This research was supported through National Science Foundation award \#1349100, with
Dr. Almadena Y. Chtchelkanova as a program manager.
\end{acknowledgment}

%


\bibliography{DSCC2014}

%

\end{document}